\documentclass[a4paper,11pt,english]{article}
\usepackage[utf8]{inputenc}
\usepackage{babel,amsmath,amssymb,amsthm,enumitem}
\usepackage[sort&compress,numbers,square,sort]{natbib}
\usepackage[colorlinks,allcolors=blue]{hyperref}
\usepackage[margin=1.25in]{geometry}
\usepackage[small,bf]{titlesec}

\titlelabel{\thetitle.\hspace{0.5em}}

\newtheorem{theorem}{Theorem}
\newtheorem{lemma}{Lemma}
\newtheorem{proposition}{Proposition}
\theoremstyle{definition}
\newtheorem{remark}{Remark}
\newtheorem{example}{Example}
\newtheorem{assumption}{Assumption}

\DeclareMathOperator{\E}{E}
\DeclareMathOperator{\Var}{Var}
\DeclareMathOperator{\I}{I}
\renewcommand{\P}{\mathrm{P}}
\newcommand{\F}{\mathcal{F}}
\newcommand{\FF}{\mathbb{F}}
\newcommand{\R}{\mathbb{R}}
\renewcommand{\tilde}{\widetilde}
\renewcommand{\epsilon}{\varepsilon}

\title{Asymptotic minimization of expected time to reach\\
a large wealth level in an asset market game}
\author{Mikhail Zhitlukhin\thanks{Steklov Mathematical Institute of the
Russian Academy of Sciences. 8 Gubkina St., Moscow, Russia. Email:
mikhailzh@mi-ras.ru. The research was supported by the Russian Science
Foundation, project no.\ 18-71-10097.}}
\date{9 July 2020}

\begin{document}
\maketitle

\begin{abstract}
We consider a stochastic game-theoretic model of a discrete-time asset
market with short-lived assets and endogenous asset prices. We prove that
the strategy which invests in the assets proportionally to their expected
relative payoffs asymptotically minimizes the expected time needed to reach
a large wealth level. The result is obtained under the assumption that the
relative asset payoffs and the growth rate of the total payoff during each
time period are independent and identically distributed.

\medskip\noindent
\textit{Keywords:} asset market game, crossing time, survival strategy,
martingales.

\medskip
\noindent
\textit{MSC 2010:} 91A25, 91B55. \textit{JEL Classification:} C73, G11.
\end{abstract}

\section{Introduction}
One of classical problems in mathematical finance consists in finding an
investment strategy which reaches a given wealth level as quick as possible.
It is generally known that in market models with exogenously specified asset
prices log-optimal strategies asymptotically minimize the expected time of
reaching a large wealth level, at least when asset returns are specified by
i.i.d.\ random variables; see e.g.\ the seminal paper \cite{Breiman61} for a
result in discrete time, or a more recent work \cite{KardarasPlaten10} for a
continuous-time model with L\'evy processes. In the present paper we obtain
an analogous result for a game-theoretic model of a market with endogenous
prices -- an \emph{asset market game}.

We consider a discrete-time model in which assets yield random payoffs that
are divided between agents (investors) proportionally to the number of
shares of each asset held by an investor. Asset prices are determined
endogenously by an equilibrium of supply and demand and depend on investors'
strategies. As a result, the evolution of investors' wealth depends not only
on their own strategies and realized asset payoffs but also on strategies of
the other investors in the market. Our goal is to identify an investment
strategy that allows an investor to reach a large wealth level
asymptotically not slower, on average, than any other investor in the
market. We show that there exists a strategy with this property, and,
moreover, it does not depend on the strategies used by the other investors.

This research was motivated by results in \emph{evolutionary finance} -- the
field which studies financial markets from a point of view of evolutionary
dynamics and investigates properties of investment strategies like survival,
extinction, dominance, and how they affect the distribution of wealth;
recent reviews of this direction can be found in
\cite{Holtfort19,EvstigneevHens+16}. We will work within the evolutionary
model of a market with \emph{short-lived} assets proposed by
\citet{AmirEvstigneev+13} (among earlier models of a similar structure one
can mention, e.g.,
\cite{BlumeEasley92,EvstigneevHens+02,HensSchenkHoppe05}). Short-lived
assets can be purchased by investors at time $t$, yield payoffs at $t + 1$,
and then the cycle repeats. They have no liquidation value, so investors can
get profit or loss only by receiving asset payoffs and paying for buying new
assets. Certainly, such a model is a simplification of a real stock market,
however models with short-lived assets have been widely studied in the
literature because they are more amenable to mathematical analysis and ideas
developed for them may be transferred to more realistic models.

The main results of the present paper are related to the strategy
$\lambda^*$ of \cite{AmirEvstigneev+13}, which splits an investment budget
between assets proportionally to their expected payoffs. In~that paper, it
was shown that $\lambda^*$ is a \emph{survival} strategy in the sense that
it allows investors using it to keep their relative wealth (the share in the
total market wealth) bounded away from zero on the whole infinite time
interval with probability 1. As observed in \cite{AmirEvstigneev+13}, in
view of such a structure, this strategy is analogous to the Kelly rule of
``betting one's beliefs'' in markets with exogenous asset prices (see
\cite{Kelly56}; a collection of papers on the Kelly rule can be found in
\cite{MacLeanThorpZiemba11}). Moreover, the key step to show that
$\lambda^*$ is a survival strategy was to prove that it makes the logarithm
of the relative wealth of an investor who uses it a submartingale, which is
analogous to the log-optimality property in markets with exogenous prices
(see, e.g., \cite{AlgoetCover88}, or later literature where such strategies
are often called growth-optimal, benchmark, or num\'eraire portfolios,
\cite{Long90,KaratzasKardaras07,PlatenHeath06}).

Our first main result shows that the expected time needed for an investor
using $\lambda^*$ to reach a wealth level $l$ is asymptotically, as
$l\to\infty$, not greater than the same time for any other investor in the
market, i.e.\ if we denote these times by $\tau_l^*$ and $\tau_l$, then
$\xi:=\limsup_{l\to\infty}\E \tau_l^*/ \E \tau_l \le 1$. Compared to
\cite{AmirEvstigneev+13}, where no conditions on the distribution of asset
payoffs are imposed, we require that the payoffs are generated by sequences
of i.i.d.\ random variables in a certain way, which is a usual assumption in
various settings of time minimization problems in asset market models (cf.,
e.g., \cite{Breiman61,KardarasPlaten10}), as well as in earlier works in
evolutionary finance.

The second main result states that, under some additional conditions, the
strict inequality $\xi<1$ takes place if the strategy of the other investor
is essentially different from $\lambda^*$ in some sense, and we find an
upper bound for $\xi$ which is strictly less than~1. It is interesting to
note that, among its assumptions, this results requires the payoffs to be
strictly random, and we provide a counterexample with non-random asset
payoffs where $\xi=1$. In other words, volatility, which we associate here
with randomness of payoffs, helps $\lambda^*$ to beat other strategies (see
further discussion in Section~\ref{section-results}).

The paper is organized as follows. Section~\ref{section-model} describes the
model, Section~\ref{section-results} states the main results, and
Section~\ref{section-proofs} contains their proofs.

\section{The model}
\label{section-model}
The model we use is essentially equivalent to that of
\cite{AmirEvstigneev+13}, but it will be more convenient to formulate it
using the notation which is more common in stochastic analysis.

For ease of exposition, let us first briefly describe the structure of the
model in plain language. The market consists of $M\ge 2$ agents (investors)
and $N\ge 2$ assets. At each moment of time $t=1,2,\ldots$\,, the assets
yield random payoffs, which are divided between the investors proportionally
to the number of shares of each asset purchased by an investor at time
$t-1$. The supply of each asset is given exogenously (and, without loss of
generality, is normalized to 1), while the demand depends on actions of the
investors, i.e.\ their investment strategies. An investment strategy
consists of an investor's decisions, made at every moment of time
simultaneously with the other investors and independently of them, on what
proportion of wealth to spend on buying each asset. Asset prices are
determined by means of the market clearing mechanism, i.e.\ they are set in
such a way that the demand becomes equal to the supply. Then, at the next
moment of time, the assets purchased by the investors yield payoffs and the
cycle repeats. The important simplifying modeling assumption consists in that
the assets bought at time $t-1$ cannot be sold at $t$, i.e.\ they disappear
after yielding payoffs without any liquidation value and are replaced by
their ``copies''. Hence, we can say that they live for just one period and
call them \emph{short-lived}.

To define the model formally, introduce a probability space $(\Omega,\F,\P)$
with a filtration $\FF=(\F_t)_{t=0}^\infty$. Payoffs of asset $n=1,\ldots,N$
are specified by a sequence of random variables $X_t^n\ge 0$, $t\ge 1$,
which is $\FF$-adapted (i.e.\ $X_t^n$ are $\F_t$-measurable). It is assumed
that $X_t^n$ are given exogenously, i.e.\ do not depend on actions of the
investors, and that $\sum_n X_t^n > 0$ for all $t\ge 1$ (hereinafter all
equalities and inequalities for random variables are assumed to hold with
probability 1).

The wealth of investor $m=1,\ldots,M$ is specified by an adapted random
sequence $Y_t^m\ge 0$. The initial wealth $Y_0^m$ of each investor is
non-random and strictly positive. Further evolution of wealth depends on the
investors' actions and the asset payoffs. Actions of investor $m$ are
represented by a sequence of vectors of investment proportions $\lambda_t^m
= (\lambda_t^{m,1},\ldots,\lambda_t^{m,N})$, ${t\ge 1}$, according to which
this investor allocates the available budget $Y_{t-1}^m$ for buying assets
at time $t-1$. Short sales are not possible and the whole wealth is
reinvested, so the vectors $\lambda_t^m$ belong to the standard $N$-simplex
$\Delta =\{\lambda\in \R_+^{N} : \sum_n \lambda^{n} =1\}$.

To allow dependence on a random outcome and the history of the market, we
define a strategy of an investor as a sequence of functions
\[
\Lambda_t(\omega, y_0,\lambda_1,\ldots,\lambda_{t-1})\colon \Omega\times
\R^M_+\times \Delta^{M(t-1)}\to \Delta,\qquad t\ge1,
\]
which are
$\F_{t-1}\otimes\mathcal{B}(\R^M_+\times\Delta^{M(t-1)})$--measurable
($\mathcal{B}$ stands for the Borel $\sigma$-algebra). The argument $y_0\in
\R_+^M$ corresponds to the vector of initial wealth
$Y_0=(Y_0^1,\ldots,Y_0^M)$. The arguments $\lambda_s=(\lambda_s^{m,n})$,
$m=1,\ldots,M$, $n=1,\ldots,N$, $s=1,\ldots,t-1$, are investment proportions
selected by the investors at the past moments of time (for $t=1$, the
function $\Lambda_1(\omega,y_0)$ does not depend on $\lambda_s$). If this
strategy is used by investor $m$, then the value of the function $\Lambda_t$
corresponds to the vector of investment proportions $\lambda_t^{m}$. The
measurability of $\Lambda_t$ in $\omega$ with respect to $\F_{t-1}$ means
that future payoffs are not known to the investors at a moment when they
decide upon their actions.

After the investors have chosen their investment proportions at time $t-1$,
the equilibrium asset prices $p_{t-1}^n$ are determined from the market
clearing condition that the aggregate demand of each asset is equal to the
aggregate supply, which is normalized to~1. Since investor $m$ can buy
$x_t^{m,n} = \lambda_t^{m,n} Y_{t-1}^m / p_{t-1}^n$ units of asset $n$, we
must have
\[
p_{t-1}^n = \sum_{m=1}^M \lambda_{t}^{m,n} Y_{t-1}^m.
\] 
If $\sum_m\lambda_t^{m,n}=0$, i.e.\ no one invests in asset $n$, we put
$x_t^{m,n}=0$ for all $m$; in this case the price $p_{t-1}^n=0$ can be
defined in an arbitrary way with no effect on the investors' wealth, so we
will put $p_{t-1}^n=0$ for convenience. At the next moment of time $t$, the
total payoff received by investor $m$ from the assets in the portfolio will
be equal to $\sum_n x_t^{m,n} X_t^n$. Consequently, the wealth sequence
$Y_t^m$ is defined by the recursive relation
\begin{equation}
Y_{t}^m(\omega) =  \sum_{n=1}^N
\frac{\lambda_t^{m,n}(\omega) Y_{t-1}^m(\omega)}{\sum_k
\lambda_{t}^{k,n}(\omega) Y_{t-1}^k(\omega)} X_t^n(\omega),\label{capital-equation-1}
\end{equation}
where $\lambda_t^{m,n}(\omega)$ denotes the \emph{realization} of
investor $m$'s strategy in this market, which is defined recursively as the
sequence
\begin{equation}
\lambda_t^{m,n}(\omega) = \Lambda_t^{m,n}(\omega, Y_0, \lambda_1(\omega),
\ldots,\lambda_{t-1}(\omega)).\label{realizations}
\end{equation}

Equation \eqref{capital-equation-1} expresses the wealth dynamics of an
individual investor in the market. Observe that $Y_t^m$ implicitly depends
on the strategies of the other investors. At the same time, if some investor
$m$ uses a fully diversified strategy, i.e.\
\begin{equation}
\label{positive-proportions}
\lambda_{t}^{m,n} > 0\ \text{for all}\ t,n,
\end{equation}
then $Y_t^m>0$ for all $t$ and the total market wealth, which we will denote
by $W_t = \sum_m Y_t^m$, does not depend on the investors' strategies and is
equal to $\sum_n X_t^n$.

\begin{remark}[On extensions of the model]
The main features of the model which considerably simplify its mathematical
analysis are that (a)~the assets are short-lived, (b)~the whole wealth is
reinvested and there is no risk-free asset, (c)~there are no short-sales,
and (d)~the time runs discretely. There is a number of papers where these
assumptions are relaxed. Among them, one can mention, for example,
\cite{AmirEvstigneev+11}, \cite{BelkovEvstigneev+20,DrokinZhitlukhin20},
\cite{AmirBelkov+20}, \cite{Zhitlukhin20} which address, respectively, the
limitations (a), (b), (c), (d). However, to my knowledge, there is no
general model which would combine all these extensions together.
\end{remark}

\section{Main results}
\label{section-results}
For a number $l>0$, let $\tau_l^m$ denote the stopping time when the wealth
of investor $m$ reaches or exceeds the level $l$ for the first time, i.e.
\[
\tau_l^m = \min\{t\ge 0: Y_t^m \ge l\},
\]
where $\min\emptyset = \infty$. We are interested in finding a strategy
which makes $\E \tau_l^m$ small compared to other strategies asymptotically
as $l\to\infty$.

Our first result, Theorem \ref{theorem-1} below, provides such a strategy in
an explicit form. We will prove it under the following assumption on the
payoff sequences.

\begin{assumption}
\label{A}
 The sequences $X_t^n$  can be represented in the form
\begin{equation}
\label{main-assumption}
X_1^n = \rho_1 R_1^n \sum_{m=1}^M Y_0^m, \qquad
X_t^n = \rho_t R_t^n \sum_{i=1}^N X_{t-1}^i, \quad t\ge2,\notag
\end{equation}
where
\begin{enumerate}[leftmargin=*,itemsep=0mm,parsep=0mm,topsep=0mm,label=(A.\arabic*)]
\item\label{assumption-growth} $\rho_t>0$ is an adapted sequence of
identically distributed random variables such that $\E (\ln \rho_t)^2 <
\infty$, $\E \ln \rho_t > 0$, and $\rho_t$ are independent of  $\F_{t-1}$ for all $t$;
\item\label{assumption-positivity} $R_t = (R_t^1,\ldots,R_t^N)$ is an
adapted sequence of random vectors with values in $\Delta$ and there
exists $\epsilon>0$ such that $\E(R_t^n\mid \F_{t-1})\ge \epsilon$ for all
$n$ and $t$.
\end{enumerate}
\end{assumption}

The sequences $\rho_t$ and $R_t$ have a rather clear interpretation. Indeed,
$\rho_t$ expresses the growth rate of the total payoff
($\sum_n X_t^n = \rho_t \sum_n X_{t-1}^n$), and $R_t^n = X_t^n/\sum_i
X_t^i$ are the relative payoffs of the assets.  Observe that if
\eqref{positive-proportions} holds, then $W_t = \sum_n X_t^n = \rho_t W_{t-1}$, and
\ref{assumption-growth} implies that $\lim_{t\to\infty}W_t=\infty$ by the
strong law of large numbers.

Introduce the following strategy $\Lambda^*$, which depends only on $t$ and
$\omega$, and has the components
\[
\Lambda_t^{*,n} \equiv \lambda^{*,n}_t = \E(R_t^n\mid \F_{t-1}).
\]
Note that this is the same strategy as $\lambda^*$ in
\cite{AmirEvstigneev+13}.

\begin{theorem}
\label{theorem-1}
Let Assumption~\ref{A} hold and suppose investor $1$ uses the strategy $
\Lambda^*$. Then $\E \tau_{l}^1 <\infty$ for any $l>0$, and for any other
investor $m\in\{2,\ldots,M\}$
\begin{equation}
\limsup_{l\to\infty} \frac{\E \tau_{l}^1}{\E \tau_l^m} \le
1.\label{minimal-time}
\end{equation}
\end{theorem}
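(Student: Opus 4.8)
The plan is to compare every investor's crossing time with the first-passage time of the total market wealth and to exploit that $\Lambda^*$ keeps the relative wealth a submartingale. Write $W_t=\sum_n X_t^n$ for the total wealth; by Assumption~\ref{A} together with \eqref{positive-proportions}, $W_t=W_0\prod_{s=1}^t\rho_s$ with $W_0=\sum_m Y_0^m$, so $S_t:=\ln W_t=\ln W_0+\sum_{s=1}^t\ln\rho_s$ is a random walk whose increments are i.i.d., independent of the past, with positive mean $g:=\E\ln\rho_1>0$ and finite variance. For each investor put $r_t^m:=Y_t^m/W_t\in[0,1]$, so that $\ln Y_t^m=\ln r_t^m+S_t$ and $\tau_l^m=\min\{t:\ln r_t^m+S_t\ge\ln l\}$. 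The crucial structural input, established in \cite{AmirEvstigneev+13}, is that when investor $1$ uses $\Lambda^*$ the process $\ln r_t^1$ is a submartingale; equivalently $-\ln r_t^1\ge0$ is a nonnegative supermartingale bounded in $L^1$ by $-\ln r_0^1$.

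For the lower bound I would note that $Y_t^m\le W_t$ for every $m$, hence $\tau_l^m\ge\sigma_l:=\min\{t:W_t\ge l\}$. Since $S_t$ has positive drift and finite variance, $\E\sigma_l<\infty$ and Wald's identity gives $\E S_{\sigma_l}=\ln W_0+g\,\E\sigma_l$; as $S_{\sigma_l}\ge\ln l$ this yields $\E\tau_l^m\ge\E\sigma_l\ge(\ln l-\ln W_0)/g$, a bound valid for all competitors simultaneously.

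The heart of the argument is the matching upper bound for investor~$1$. Because $\E(\ln\rho_t\mid\F_{t-1})=g$ by independence and $\ln r_t^1$ is a submartingale, the process $U_t:=\ln Y_t^1-gt=\ln r_t^1+S_t-gt$ satisfies $\E(U_t-U_{t-1}\mid\F_{t-1})\ge0$, i.e.\ it is a submartingale. Applying optional stopping to $U_{t\wedge\tau}$ with $\tau:=\tau_l^1$ and rearranging gives $g\,\E(t\wedge\tau)\le\E\ln Y_{t\wedge\tau}^1-\ln Y_0^1$, so everything reduces to bounding $\E\ln Y_{t\wedge\tau}^1$ by $\ln l$ plus an $l$-independent constant. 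On $\{\tau>t\}$ one has $\ln Y_t^1<\ln l$, while on $\{\tau\le t\}$ the overshoot is controlled by $\ln Y_\tau^1-\ln l\le(\ln r_\tau^1-\ln r_{\tau-1}^1)^{+}+(\ln\rho_\tau)^{+}$; the first term is at most $-\ln r_{\tau-1}^1$ (since $r_\tau^1\le1$), whose stopped expectation is bounded by $-\ln r_0^1$ through optional stopping of the supermartingale $-\ln r_t^1$, and the expectation of the second is bounded by $\E(\ln\rho_1)^{+}<\infty$ using the independence of $\rho_\tau$ from the event $\{\tau=s\}\in\F_{s-1}$. Letting $t\to\infty$ by monotone convergence then gives $\E\tau_l^1\le(\ln l+C)/g<\infty$ for a constant $C$ independent of $l$, which in particular proves the asserted finiteness of $\E\tau_l^1$.

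Combining the two estimates yields $\E\tau_l^1/\E\tau_l^m\le(\ln l+C)/(\ln l-\ln W_0)\to1$ as $l\to\infty$, which is \eqref{minimal-time}. I expect the main obstacle to be the uniform-in-$l$ overshoot control in the upper bound: one must ensure that the relative-wealth factor $\ln r_t^1$ neither delays the crossing by more than a constant in expectation nor produces an unbounded overshoot, and this is exactly where the submartingale property of $\ln r_t^1$ together with the second-moment condition \ref{assumption-growth} enters. The submartingale property itself is the deep fact borrowed from \cite{AmirEvstigneev+13}; everything else is first-passage theory for a random walk with positive drift.
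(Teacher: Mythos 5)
Your overall skeleton matches the paper's: a Wald-identity lower bound $\E\tau_l^m \ge \theta^{-1}(\ln l - \ln W_0)$ valid for every investor, plus an upper bound for investor 1 obtained from the submartingale property of $\ln r_t^1$ via optional stopping, with everything reduced to controlling the overshoot at the crossing time. Your lower bound is correct (and even a bit cleaner than the paper's, since it avoids the representative investor). The gap is in the overshoot control, on two counts.

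First, the claim that $\E[(\ln\rho_\tau)^+;\tau\le t]\le \E(\ln\rho_1)^+$ because ``$\{\tau=s\}\in\F_{s-1}$'' is false: $\tau$ is the first time $Y_t^1\ge l$, so $\{\tau=s\}\in\F_s$ but not $\F_{s-1}$ (whether the crossing happens at time $s$ depends on $\rho_s$ itself, and is positively correlated with large $\rho_s$). For a general stopping time one cannot bound $\E(\ln\rho_\tau)^+$ by a constant; the best available bound is of order $\sqrt{\E\tau}$. This part is repairable: e.g.\ $\E[(\ln\rho_\tau)^+;\tau\le t]\le \E\bigl[\bigl(\sum_{s\le\tau\wedge t}((\ln\rho_s)^+)^2\bigr)^{1/2}\bigr]\le \bigl(\E((\ln\rho_1)^+)^2\,\E(\tau\wedge t)\bigr)^{1/2}$ by Wald, which is $o(\E(\tau\wedge t))$ and still suffices; this is exactly the role of the paper's Lemma~\ref{lemma-stopping} (Wald plus Burkholder--Davis--Gundy), an ingredient you would need and did not supply.

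Second, and this is the real crux: the bound $\E[-\ln r^1_{\tau-1};\tau\le t]\le -\ln r_0^1$ ``by optional stopping of the supermartingale $-\ln r_t^1$'' is invalid, because $\tau-1$ is not a stopping time ($\{\tau-1\le s\}\in\F_{s+1}$). Optional stopping gives no control over a supermartingale evaluated one step \emph{before} a stopping time, and such control can genuinely fail: for the nonnegative martingale $Z_0=1$, $Z_s\in\{2Z_{s-1},0\}$ with probability $1/2$ each, and $\tau$ the hitting time of $0$, one has $\E[Z_{\tau-1}\I(\tau\le t)]=t/2\to\infty$. The failure mode is not an artifact here: from \eqref{ratio}, when the opponent's proportions $\tilde\lambda_t^n$ are nearly zero in the coordinates where $R_t^n$ pays off and $r_{t-1}^1$ is small, the one-step jump $\ln(r_t^1/r_{t-1}^1)$ can be of order $-\ln r_{t-1}^1$, which is unbounded, and the crossing of level $l$ can occur precisely on such a jump. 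This is why the paper does not work with $r_t$ directly but constructs the perturbed opponent $\tilde\lambda_t'=\tilde\lambda_t+\delta\lambda_t\I(\min_n\tilde\lambda_t^n\le\epsilon/2,\,r'_{t-1}\le 1/2)$ and the auxiliary process $r_t'\le r_t$ (so $\tau_l\le\tau'$), which has its one-step growth deterministically capped, $r_t'\le r'_{t-1}/(\delta\epsilon)$, while remaining a submartingale --- the latter requiring the quantitative form of the Gibbs inequality (Lemma~\ref{lemma-gibbs}, with the $\tfrac14\|\cdot\|^2$ term) and the specific choice $\delta=\epsilon^2/256$. You correctly identified the overshoot as the main obstacle, but the mechanism you propose to dispose of it does not work, and some construction of this kind (or an equally substantive substitute) is the missing idea.
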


This theorem shows that no investor can reach a wealth level $l$ faster
asymptotically (as $l\to\infty$) than an investor who uses the strategy
$\Lambda^*$. The next theorem strengthens inequality \eqref{minimal-time} if
the other investor uses an essentially different strategy. We will establish
it for the case of two investors in the market and when the following
additional assumption holds.

\begin{assumption}
\label{B} The sequence  of vectors $R_t$ from Assumption~\ref{A} is such that
\begin{enumerate}[leftmargin=*,itemsep=0mm,parsep=0mm,topsep=0mm,label=(B.\arabic*)]
\item\label{assumption-iid} $R_t$ are identically distributed and
independent of $\F_{t-1}$ for all $t$;
\item\label{assumption-linind} $R_t$ have linearly independent components,
i.e.\ if $\sum_n c^nR_t^n = 0$ a.s.\ for $c\in\R^N$, then $c=0$.
\end{enumerate}
\end{assumption}

Observe that if this assumption holds, then the strategy $\Lambda^*$ is
constant and $\lambda_t^{*,n} = \lambda^{*,n} = \E(R_1^n)>0$ for all $t$ and
$n$, where the inequality follows from~\ref{assumption-positivity}. For
$a>0$, introduce the function
\[
f(a) = \sup\Biggl\{ \E \ln\sum_{n=1}^N \frac{\lambda^n R^n_1}{ \lambda^{*,n}}
\ \Bigg|\  \lambda \in \Delta\ \text{and}\ \|\lambda-\lambda^*\| \ge a\Biggr\},
\]
where we put $f(a)=-1$ if the set under the supremum is empty. If this set
is non-empty (i.e.\ $a$ is small enough), then it is compact, so the
supremum is attained since the above expectation is upper semicontinuous in
$\lambda$ as follows from the Fatou lemma. Moreover, by Jensen's inequality,
for any $\lambda\in \Delta$, we have $\E \ln\sum_n
\lambda^nR_1^n/\lambda^{*,n} < 0$. The inequality is strict because the
logarithm is strictly concave and $\sum_n \lambda^nR_1^n/\lambda^{*,n}$ is
non-constant as follows from \ref{assumption-linind}. Consequently, $f(a) <
0$ for any $a>0$.

\begin{theorem}
\label{theorem-2}
Let $M=2$ and Assumptions~\ref{A},~\ref{B} hold. Suppose investor 1 uses the
strategy $\Lambda^*$ and investor 2 uses a strategy $\tilde\Lambda_t$ such
that its realization $\tilde\lambda_t=\tilde\lambda_t(\omega)$ satisfies the
inequality $\|\tilde\lambda_t - \lambda^*\| \ge a$ a.s.\ for all $t\ge 1$ with
some $a>0$, and $\tilde\lambda_t^n$ are uniformly bounded away from zero
(i.e.\ $\tilde\lambda_t^n > \tilde\epsilon$ for all $t,n$ and some
$\tilde\epsilon>0$).

Then, with $\theta = \E \ln \rho_1 > 0$, we have
\begin{equation}
\limsup_{l\to\infty} \frac{\E \tau_{l}^1}{\E \tau_l^2} \le 1 -
\frac{|f(a)|\wedge \theta}{\theta}.
\label{minimal-time-2}
\end{equation}
\end{theorem}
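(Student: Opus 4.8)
The plan is to reduce the statement to a comparison of the exponential growth rates of the two investors' wealth. Since both strategies are fully diversified, the total wealth satisfies $W_t=\sum_n X_t^n=\rho_t W_{t-1}$, so $\ln W_t=\ln W_0+\sum_{s=1}^t\ln\rho_s$ and, by the strong law of large numbers for the i.i.d.\ sequence $\ln\rho_s$, we have $t^{-1}\ln W_t\to\theta$ a.s. Writing $r_t=Y_t^1/W_t$ for the relative wealth of investor~1 (so that $Y_t^1=r_tW_t$ and $Y_t^2=(1-r_t)W_t$, since $M=2$), we get $\ln Y_t^1=\ln W_t+\ln r_t$ and $\ln Y_t^2=\ln W_t+\ln(1-r_t)$. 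I would first show that investor~1 grows at rate $\theta$, which yields $\limsup_l\E\tau_l^1/\ln l\le\theta^{-1}$, and then that investor~2 grows at rate at most $\theta-|f(a)|$, which yields $\liminf_l\E\tau_l^2/\ln l\ge(\theta-|f(a)|)^{-1}$; dividing the two asymptotics gives \eqref{minimal-time-2}.

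For the bound on $\E\tau_l^1$, I would use that $\Lambda^*$ is a survival strategy, so $r_t\ge c>0$ for some constant $c$ (the result of \cite{AmirEvstigneev+13}). Then $Y_t^1\ge cW_t=cW_0\exp(\sum_{s\le t}\ln\rho_s)$, so $\tau_l^1$ is dominated by the first passage time of the random walk $\sum_{s\le t}\ln\rho_s$ above the level $\ln l-\ln(cW_0)$. As this walk has i.i.d.\ increments with positive mean $\theta$ and finite second moment, Wald's identity together with a bound on the overshoot gives $\E\tau_l^1\le\theta^{-1}\ln l+O(1)$, whence $\limsup_l\E\tau_l^1/\ln l\le\theta^{-1}$.

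The crux is to control investor~2. Dividing \eqref{capital-equation-1} by $W_t$ and using $X_t^n=W_tR_t^n$ and $p_{t-1}^n=W_{t-1}\mu_t^n$ with $\mu_t^n:=\lambda^{*,n}r_{t-1}+\tilde\lambda_t^n(1-r_{t-1})\in\Delta$, I obtain the exact recursions $r_t/r_{t-1}=\sum_n(\lambda^{*,n}/\mu_t^n)R_t^n$ and $(1-r_t)/(1-r_{t-1})=\sum_n(\tilde\lambda_t^n/\mu_t^n)R_t^n$. Because $\E(R_t^n\mid\F_{t-1})=\lambda^{*,n}$ under Assumption~\ref{B}, the first identity gives $\E(r_t/r_{t-1}\mid\F_{t-1})=\sum_n(\lambda^{*,n})^2/\mu_t^n\ge1$ by the Cauchy--Schwarz inequality (since $\sum_n\mu_t^n=1$), with equality only when $\mu_t=\lambda^*$. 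Thus $r_t$ is a bounded submartingale and converges a.s. To identify its limit as $1$ (\emph{dominance}), I note that on the compact set $\{\lambda\in\Delta:\lambda^n\ge\tilde\epsilon,\ \|\lambda-\lambda^*\|\ge a\}\times[c,1-\eta]$ the continuous function $\sum_n(\lambda^{*,n})^2/\mu^n-1$ is strictly positive, hence bounded below by some $\delta(\eta,a)>0$; were $\lim_t r_t\le1-\eta$ on a set of positive probability, the compensator of the submartingale would diverge there, a contradiction. Therefore $r_t\to1$ a.s.

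Finally, write $\ln(1-r_t)=\ln(1-r_0)+\sum_{s=1}^t\xi_s$ with $\xi_s=\ln\sum_n(\tilde\lambda_s^n/\mu_s^n)R_s^n$. The bounds $\tilde\lambda_s^n>\tilde\epsilon$, $\mu_s^n\le1$ and $\mu_s^n\ge c_0:=\tilde\epsilon\wedge\min_n\lambda^{*,n}$ show the $\xi_s$ are uniformly bounded, so the martingale-difference part $\xi_s-\E(\xi_s\mid\F_{s-1})$ obeys the strong law and its Ces\`aro averages vanish a.s. For the compensator, $\E(\xi_s\mid\F_{s-1})=G(\tilde\lambda_s,r_{s-1})$, where $G(\lambda,r)=\E\ln\sum_n\lambda^nR_1^n/(\lambda^{*,n}r+\lambda^n(1-r))$ satisfies $G(\lambda,1)=\E\ln\sum_n\lambda^nR_1^n/\lambda^{*,n}\le f(a)$ for $\|\lambda-\lambda^*\|\ge a$, and is Lipschitz in $r$ uniformly in $\lambda$ (the denominators are $\ge c_0$), so $G(\tilde\lambda_s,r_{s-1})\le f(a)+C(1-r_{s-1})$. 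Since $r_{s-1}\to1$, the averages of $C(1-r_{s-1})$ vanish, giving $\limsup_t t^{-1}\ln(1-r_t)\le f(a)$ and hence $\limsup_t t^{-1}\ln Y_t^2\le\theta+f(a)=\theta-|f(a)|$. When $\theta-|f(a)|>0$ this forces $\liminf_l\tau_l^2/\ln l\ge(\theta-|f(a)|)^{-1}$ a.s., and Fatou's lemma transfers the inequality to $\E\tau_l^2$; combined with the bound on $\E\tau_l^1$ this yields \eqref{minimal-time-2}, while the case $\theta-|f(a)|\le0$ makes $\E\tau_l^2/\ln l\to\infty$ and the ratio tend to $0=1-\theta/\theta$. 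The main obstacle is the dominance step together with the fact that the drift bound $G(\tilde\lambda_s,\cdot)\le f(a)$ is attained only in the limit $r\to1$, so $r_t\to1$ must be established before the rate estimate becomes sharp.
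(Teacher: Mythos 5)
Your treatment of investor 2 is a genuinely different (and essentially viable) route from the paper's. The paper works at the level of expectations throughout: it shows $\ln \tilde r_t$ is a supermartingale with conditional drift at most $f(a)-\ln r_{t-1}$, applies optional stopping at $\tilde\tau_l$, and absorbs the $-\ln r_{s-1}$ correction using the bound $\E\eta_c<\infty$ on the expected number of periods with $r_t<c$ (obtained from the compensator of the submartingale $\ln r_t$). You instead prove the almost-sure rate $\limsup_t t^{-1}\ln Y_t^2\le\theta+f(a)$ via dominance ($r_t\to 1$) plus a martingale strong law, and then pass to expectations by Fatou; this is the correct direction of Fatou for a \emph{lower} bound on $\E\tau_l^2$, and your handling of the degenerate case $\theta+f(a)\le 0$ matches the $\wedge\,\theta$ in \eqref{minimal-time-2}. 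One repair is needed in the dominance step: the compensator increment of the submartingale $r_t$ is $r_{t-1}\bigl(\sum_n(\lambda^{*,n})^2/\mu_t^n-1\bigr)$, which carries the factor $r_{t-1}$, so your divergence argument does not exclude $r_t\to 0$; you must first invoke survival in its correct form ($\ln r_t$ is a non-positive submartingale, hence converges a.s., so $\inf_t r_t>0$ a.s.) and localize on events $\{\inf_t r_t\ge 1/k\}$ rather than on a rectangle $[c,1-\eta]$ with a constant $c$.

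The genuine gap is in your bound $\limsup_l \E\tau_l^1/\ln l\le\theta^{-1}$ for investor 1. You assert that survival gives ``$r_t\ge c>0$ for some constant $c$''; this is false. The result of \cite{AmirEvstigneev+13} is that $\inf_t r_t>0$ \emph{almost surely}, i.e.\ the infimum is a strictly positive random variable, and no deterministic lower bound exists in general. With the correct statement, your random-walk domination compares $\tau_l^1$ with the passage time of $\sum_{s\le t}\ln\rho_s$ over the \emph{random} level $\ln l-\ln\bigl(W_0\inf_t r_t\bigr)$, and Wald's identity then yields only $\E\tau_l^1\le\theta^{-1}\bigl(\ln l+\E[-\ln\inf_t r_t]\bigr)+O(1)$, which is useless unless $\E[-\ln\inf_t r_t]<\infty$. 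Nothing you have established gives this integrability: since $-\ln r_t$ is a non-negative supermartingale, Doob's maximal inequality yields only the tail bound $\P(\sup_t(-\ln r_t)\ge x)\le(-\ln r_0)/x$, which decays too slowly to be integrable. This is exactly the obstacle the paper's proof of Theorem~\ref{theorem-1} is built to circumvent: it estimates $\E\ln Y_{\tau\wedge t-1}$ just before the crossing, bounding $\E\ln W_{\tau\wedge t}$ by Wald's identity, the overshoot $\E\ln\rho_{\tau\wedge t}$ by Lemma~\ref{lemma-stopping}, the stopped term $\E\ln r_{\tau\wedge t}$ by optional stopping of the submartingale (which requires only $\E\ln r_{\tau\wedge t}\ge\ln r_0$, not a pathwise bound), and the one-step jump $\ln(r_{\tau\wedge t}/r_{\tau\wedge t-1})$ by a uniform constant, arriving at \eqref{bound-2}. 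Under the hypotheses of Theorem~\ref{theorem-2} this machinery even simplifies: since $\tilde\lambda_t^n>\tilde\epsilon$, one has $\beta_t^n\ge\min(\epsilon,\tilde\epsilon)$, so the one-step growth of $r_t$ is uniformly bounded and the paper's auxiliary modification $r_t\mapsto r_t'$ can be skipped; but some expectation-level argument of this kind must replace your pathwise one before your proof is complete.
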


Note that the assumption about only two investors, $M=2$, is not too
restrictive. In the case $M\ge 3$, the above theorem can be used if one
replaces investors $m=2,\ldots,M$ with the \emph{representative investor}
and let $\tilde\lambda$ be the realization of the strategy of this new
investor (see
\eqref{representative-investor-1}--\eqref{representative-investor-2} below).
Since the time needed for an individual investor $m\ge 2$ to reach a given
wealth level is not less than the same time for the representative investor,
inequality \eqref{minimal-time-2} will remain valid if $\tau_l^2$ is
replaced by $\tau^m_l$.

Inequality \eqref{minimal-time} generally cannot be improved if
Assumption~\ref{assumption-linind} does not hold. In Example~\ref{example}
below, we demonstrate this for the case when $R_t$ are non-random. This fact
can be compared with the known phenomenon of \emph{volatility-induced
growth} in models with exogenous asset prices, which consists in that a
constant proportions strategy can achieve a growth rate strictly greater
than the growth rate of any asset, if the relative prices are non-constant.
If the relative prices are constant this effect disappears, which may seem
counter-intuitive since usually randomness (or volatility) is regarded as an
impediment to financial growth. A~popular intuitive explanation of this
phenomenon consists in that a constant proportions strategy ``buys low and
sells high'' (see, e.g., \cite{FernholzShay82} or Chapter~15
in~\cite{Luenberger98}), but such an explanation have known flaws
\cite{DempsterEvstigneev+07}.

\begin{example}
\label{example}
Suppose $W_0=1$ and the asset payoffs are non-random and given by
\begin{equation}
\qquad X_t^n = R^n \rho^t, 
\label{simple-model}
\end{equation}
where $\rho>1$, $R \in \Delta$ with $R^n \ge \epsilon$ for all $n$ and
some $\epsilon>0$. Clearly, this model satisfies Assumptions~\ref{A} and
\ref{assumption-iid}, and the strategy $\Lambda^*$ is of the form
$\Lambda^*_t = R$ for all $t$.

\begin{proposition}
\label{proposition-example}
Suppose in model~\eqref{simple-model} investor 1 uses the strategy
$\Lambda^*$ and investor 2 uses some constant strategy $\tilde \Lambda_t =
\tilde \lambda$. Then
\[
\lim\limits_{l\to\infty}\frac{\tau^1_l}{\tau^2_l} = 1.
\]
\end{proposition}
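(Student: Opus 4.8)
The plan is to exploit the fact that the whole model is deterministic, so that $\tau_l^1,\tau_l^2$ are deterministic numbers and the inequalities defining them can be analyzed directly. First I would record that, since $R\in\Delta$ and $\rho>1$, the total wealth is $W_t=\sum_n X_t^n=\rho^t$, and that $Y_0^1=r_0\in(0,1)$ because $W_0=Y_0^1+Y_0^2=1$ with both summands positive. Writing $r_t=Y_t^1/W_t$ for investor~1's relative wealth and substituting $Y_{t-1}^1=r_{t-1}\rho^{t-1}$, $Y_{t-1}^2=(1-r_{t-1})\rho^{t-1}$ into \eqref{capital-equation-1}, the factors $\rho^{t-1}$ cancel and $r_t$ obeys the autonomous deterministic recursion $r_t=g(r_{t-1})$, where
\[
g(r)=\sum_{n=1}^N\frac{(R^n)^2\,r}{R^n r+\tilde\lambda^n(1-r)}.
\]

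Next I would study the one-dimensional map $g\colon[0,1]\to[0,1]$. It fixes $0$ and $1$, and writing $d_n=R^n r+\tilde\lambda^n(1-r)$ (a probability vector, since $\sum_n d_n=1$) one gets $g(r)-r=r\bigl(\sum_n (R^n)^2/d_n-1\bigr)\ge0$ from the Cauchy--Schwarz inequality $\bigl(\sum_n R^n\bigr)^2\le\bigl(\sum_n (R^n)^2/d_n\bigr)\bigl(\sum_n d_n\bigr)$; this is the deterministic shadow of the submartingale property of $\ln(Y_t^1/W_t)$. Equality holds iff $R^n/d_n$ is constant, i.e.\ iff $\tilde\lambda=R$. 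If $\tilde\lambda=R$, then $r_t\equiv r_0$, both wealths grow exactly like $\rho^t$ with fixed shares, and $\tau_l^1/\tau_l^2\to1$ is immediate. So I would assume $\tilde\lambda\ne R$; then $g(r)>r$ on $(0,1)$, the sequence $r_t$ is increasing, and its limit must be a fixed point, hence $r_t\uparrow1$.

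The heart of the argument is the rate at which $u_t:=1-r_t\to0$. A second-order Taylor expansion of $g$ at the \emph{neutral} fixed point $r=1$ (note $g'(1)=1$) gives $g(1-u)=1-u+Cu^2+O(u^3)$ with $C=\sum_n(\tilde\lambda^n-R^n)^2/R^n>0$, whence
\[
u_t=u_{t-1}-C\,u_{t-1}^2+O(u_{t-1}^3).
\]
Passing to reciprocals yields $1/u_t=1/u_{t-1}+C+O(u_{t-1})$, and since $u_{t-1}\to0$ a Stolz--Ces\`aro argument gives $u_t\sim 1/(Ct)$. The qualitative consequence I actually need is that $u_t$ decays only polynomially, so that $\ln u_t=-\ln t+O(1)=o(t)$. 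This is exactly where the absence of randomness bites: the fixed point at $1$ is non-hyperbolic, so investor~2's relative wealth is not pushed to zero geometrically.

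Finally I would convert this into crossing-time asymptotics. Since $\ln Y_t^1=t\ln\rho+\ln r_t$ with $\ln r_t\in[\ln r_0,0]$ bounded, and $\ln Y_t^2=t\ln\rho+\ln u_t$ with $\ln u_t=o(t)$, both log-wealths satisfy $\tfrac1t\ln Y_t^m\to\ln\rho>0$. Using the elementary fact that $a_t\to\infty$ and $a_t/t\to c>0$ imply $\min\{t:a_t\ge x\}\sim x/c$ as $x\to\infty$, applied with $a_t=\ln Y_t^m$ and $x=\ln l$, I obtain $\tau_l^1\sim\tau_l^2\sim\ln l/\ln\rho$, and hence $\tau_l^1/\tau_l^2\to1$. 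The main obstacle is the third step: establishing the polynomial decay of $u_t$, since the neutrality $g'(1)=1$ rules out the easy geometric estimate and forces the second-order analysis. Had $u_t$ decayed exponentially the two growth rates would differ and the limit would drop strictly below $1$, so this slow-decay phenomenon is precisely what makes the proposition true and illustrates the degenerate, non-random case of Theorem~\ref{theorem-2}.
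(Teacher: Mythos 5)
Your proof is correct, but it takes a genuinely different route from the paper's. The paper stays within the machinery it has already built: it notes $\tau^1_l\ge\theta^{-1}\ln l$ trivially, obtains $r_t\to1$ from the compensator bound \eqref{compensator-bound} established in the proof of Theorem~\ref{theorem-2}, and then derives the one-step estimate $\ln(\tilde r_t/\tilde r_{t-1})\ge \tilde r_{t-1}\bigl(1-\sum_n(\tilde\lambda^n)^2/(\lambda^n r_0)\bigr)$ from concavity of the logarithm and $\ln x\ge 1-x^{-1}$; a Ces\`aro argument then gives $\ln\tilde r_{\tilde\tau_l}=o(\tilde\tau_l)$, which is all that is needed in \eqref{lnl-bound}. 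You instead recast the problem as a one-dimensional deterministic dynamical system $r_t=g(r_{t-1})$, prove monotonicity via Cauchy--Schwarz, classify the fixed points, and perform a second-order expansion at the neutral fixed point $r=1$ to get the precise rate $u_t\sim1/(Ct)$. Your approach is self-contained (it does not borrow \eqref{compensator-bound} from the stochastic analysis), yields strictly more information -- the exact polynomial decay of investor 2's share and the structural explanation that non-randomness makes $r=1$ non-hyperbolic -- and it establishes both directions ($\liminf\ge1$ and $\limsup\le1$) explicitly, whereas the paper handles only the substantive direction and leaves the easy one implicit. The paper's argument is shorter and needs only the weaker fact $\ln\tilde r_t=o(t)$; indeed, within your own setup you could skip the rate entirely, since $u_t/u_{t-1}\to1$ already gives $\ln u_t=o(t)$ by Ces\`aro. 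One small omission on your side: before taking reciprocals and logarithms of $u_t$ you should verify that $u_t>0$ for every finite $t$ (i.e., investor 2 is never exactly wiped out), since your Taylor expansion only controls the tail where $u$ is already small; this is immediate from $1-g(r)=(1-r)\sum_n R^n\tilde\lambda^n/\bigl(R^nr+\tilde\lambda^n(1-r)\bigr)>0$ for $r\in(0,1)$, so the gap is cosmetic rather than genuine.
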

\end{example}

\section{Proofs}
\label{section-proofs}
We begin with two simple lemmas, which will be needed in the proofs.
Throughout this section, for a vector $x\in \R^N$ we will denote its $L^1$
and $L^2$ norms by $|x| = \sum_n |x^n|$ and $\|x\| = (\sum_n (x^n)^2)^{1/2}$
.

\begin{lemma}
\label{lemma-gibbs}
Suppose $x,y\in \R^N$ have strictly positive coordinates and $|x|=1$. Then
\begin{equation}
\sum_{n=1}^N x^n(\ln x^n - \ln y^n) \ge \frac 14 \biggl\|x-\frac{y}{|y|}\biggr\|^2 -
\ln |y|.\label{gibbs}
\end{equation}
\end{lemma}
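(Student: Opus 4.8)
The plan is to reduce \eqref{gibbs} to Pinsker's inequality for the relative entropy of two probability vectors. First I would normalize $y$: writing $\bar y = y/|y|$, which has $|\bar y|=1$, and using $\ln y^n = \ln|y| + \ln\bar y^n$ together with $\sum_n x^n = |x| = 1$, the left-hand side splits as
\[
\sum_{n=1}^N x^n(\ln x^n - \ln y^n) = \sum_{n=1}^N x^n\ln\frac{x^n}{\bar y^n} - \ln|y|.
\]
The term $-\ln|y|$ matches exactly the corresponding term on the right-hand side of \eqref{gibbs}, so after cancellation the claim becomes the purely ``probabilistic'' inequality $\sum_n x^n\ln(x^n/\bar y^n) \ge \tfrac14\|x-\bar y\|^2$, where now both $x$ and $\bar y$ lie in the interior of the simplex $\Delta$.

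Next I would recognize the left-hand side as the relative entropy (Kullback--Leibler divergence) of the probability vectors $x$ and $\bar y$, and invoke Pinsker's inequality in the form $\sum_n x^n\ln(x^n/\bar y^n) \ge \tfrac12 |x-\bar y|^2$, where $|\cdot|$ denotes the $L^1$ norm. Finally, the elementary comparison $|v| \ge \|v\|$ valid for every $v\in\R^N$ gives $\tfrac12|x-\bar y|^2 \ge \tfrac12\|x-\bar y\|^2 \ge \tfrac14\|x-\bar y\|^2$, which is more than enough; the factor $\tfrac14$ in the statement leaves comfortable room and allows one to avoid chasing the sharp constant.

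If a self-contained argument is preferred over citing Pinsker, I would prove it in two steps. The scalar step establishes, for $p,q\in(0,1)$, the binary bound $p\ln\frac pq + (1-p)\ln\frac{1-p}{1-q} \ge 2(p-q)^2$: fixing $q$ and regarding the difference of the two sides as a function of $p$, one checks that this function and its first derivative vanish at $p=q$, while its second derivative equals $\frac{1}{p(1-p)}-4$, which is nonnegative because $p(1-p)\le\frac14$. The aggregation step reduces the general vector case to this binary one: setting $A=\{n:x^n\ge\bar y^n\}$ and forming the two-point distributions with masses $p=\sum_{n\in A}x^n$ and $q=\sum_{n\in A}\bar y^n$, the log-sum inequality yields $\sum_n x^n\ln(x^n/\bar y^n)\ge p\ln\frac pq+(1-p)\ln\frac{1-p}{1-q}$, while $p-q=\frac12|x-\bar y|$. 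The main obstacle is this scalar inequality, which is the only place any genuine calculus enters; the normalization and the $L^1$--$L^2$ comparison are routine bookkeeping.
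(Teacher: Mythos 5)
Your proof is correct and follows essentially the same route the paper indicates: normalize $y$ to $\bar y = y/|y|$ so that the left-hand side becomes the Kullback--Leibler divergence of $x$ from $\bar y$ minus $\ln|y|$, then apply a known lower bound on that divergence (the paper simply cites this step, while you supply it via Pinsker's inequality together with the comparison $|v|\ge\|v\|$, which comfortably yields the constant $\tfrac14$). Your optional self-contained argument (binary Pinsker by calculus plus the log-sum aggregation) is also sound and just fills in the citation the paper leaves to \cite{AmirEvstigneev+13}.
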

One can see that this lemma follows from a known inequality for the
Kullback-Leibler divergence if $x$ and $y/|y|$ are considered as probability
distributions on a set of $N$ elements. A~short direct proof of
(\ref{gibbs}) can be found in \cite{AmirEvstigneev+13} (see there Lemma 2,
which is proved for $|y|=1$, but easily implies our case as well).

\begin{lemma}
\label{lemma-stopping}
Let $(\Omega,\F,(\F_t)_{t=0}^\infty,\P)$ be a filtered probability space.
Suppose $X_t$, $t\ge 1$, is an adapted sequence of identically distributed
random variables such that $X_t$ is independent of $\F_{t-1}$ and $\E X_t^2
< \infty$ for all $t\ge 1$. Denote $\mu= \E X_t$, $\sigma^2 = \Var X_t$.
Then for any stopping time $\tau\ge 1$
\[
\E X_\tau \le \mu + 2\sigma\sqrt{\E \tau}.
\]
\end{lemma}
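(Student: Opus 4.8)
The plan is to center the variables and bound $\E X_\tau$ through the second moment of the single increment $X_\tau-\mu$, exploiting only the one-step independence $X_t\perp\F_{t-1}$. First I would dispose of the trivial case: if $\E\tau=\infty$ the right-hand side is infinite and there is nothing to prove, so I may assume $\E\tau<\infty$, whence $\tau<\infty$ a.s.\ and $X_\tau$ is well defined and in $L^2$. Writing $\E X_\tau=\mu+\E(X_\tau-\mu)$ and applying Jensen's (equivalently, the Cauchy--Schwarz) inequality $\E(X_\tau-\mu)\le\sqrt{\E(X_\tau-\mu)^2}$, it then suffices to establish that $\E(X_\tau-\mu)^2\le\sigma^2\,\E\tau$.

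To estimate this second moment I would decompose along the values of $\tau$. On $\{\tau<\infty\}$ one has $(X_\tau-\mu)^2=\sum_{t\ge1}(X_t-\mu)^2\,\I(\tau=t)$, and since all terms are nonnegative, taking expectations (Tonelli) gives
\[
\E(X_\tau-\mu)^2=\sum_{t\ge1}\E\bigl[(X_t-\mu)^2\,\I(\tau=t)\bigr].
\]
The key point, and the step I expect to be the main obstacle, is a measurability mismatch: the event $\{\tau=t\}$ is $\F_t$-measurable and hence \emph{not} independent of $X_t$, so independence cannot be applied to these summands directly. The remedy is to enlarge the event to $\{\tau\ge t\}=\{\tau\le t-1\}^c$, which is $\F_{t-1}$-measurable. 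Because $(X_t-\mu)^2\ge0$ and $\{\tau=t\}\subseteq\{\tau\ge t\}$, and because $X_t$ is independent of $\F_{t-1}$, I get
\[
\E\bigl[(X_t-\mu)^2\,\I(\tau=t)\bigr]\le\E\bigl[(X_t-\mu)^2\,\I(\tau\ge t)\bigr]=\E(X_t-\mu)^2\cdot\P(\tau\ge t)=\sigma^2\,\P(\tau\ge t).
\]

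Summing over $t$ and invoking the elementary identity $\sum_{t\ge1}\P(\tau\ge t)=\E\tau$ for the integer-valued stopping time $\tau$, I obtain $\E(X_\tau-\mu)^2\le\sigma^2\,\E\tau$. Combined with the Jensen bound this yields $\E X_\tau\le\mu+\sigma\sqrt{\E\tau}$, which is in fact sharper than the asserted inequality, so the claimed bound with the factor $2$ follows a fortiori. I would remark that the identical-distribution and independence hypotheses in fact make $(X_t)$ an i.i.d.\ sequence, but the argument uses nothing beyond the one-step independence $X_t\perp\F_{t-1}$ together with the finiteness of the second moment; in particular it is a one-increment analogue of Wald's second identity, with the passage from $\{\tau=t\}$ to $\{\tau\ge t\}$ playing the role usually occupied by optional stopping.
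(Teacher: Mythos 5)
Your proof is correct, and it takes a genuinely different and more elementary route than the paper's. The paper centers the variables and works with the partial-sum martingale $M_t=\sum_{s\le t}X_s$: it bounds $X_\tau = M_\tau - M_{\tau-1}\le M_\tau-\min_{s\le\tau}M_s\,\I(\tau>1)$, kills $\E M_\tau$ by Wald's identity, and then controls $\|\min_{s\le\tau}M_s\|_{L^2}$ by the Burkholder--Davis--Gundy inequality followed by Wald's identity once more; the factor $2$ in the statement is exactly the constant from that step. You never form partial sums: you bound the single increment directly via $\E(X_\tau-\mu)^2=\sum_{t\ge1}\E\bigl[(X_t-\mu)^2\I(\tau=t)\bigr]\le\sum_{t\ge1}\E\bigl[(X_t-\mu)^2\I(\tau\ge t)\bigr]=\sigma^2\E\tau$, where the enlargement of $\{\tau=t\}$ to the $\F_{t-1}$-measurable event $\{\tau\ge t\}$ is indeed the crux that lets one-step independence act, and you finish with Cauchy--Schwarz. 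This uses only Tonelli and independence -- no optional stopping, no BDG -- and yields the sharper constant $1$ in place of $2$, so the stated bound follows a fortiori; your version would even marginally sharpen the intermediate estimate \eqref{bound-2} in the proof of Theorem~\ref{theorem-1}, where the lemma is applied to $\ln\rho_{\tau'\wedge t}$ (though the asymptotic conclusions are unchanged). What the paper's heavier machinery buys is control of the full running minimum of the partial sums, but neither the lemma nor its application needs that. One pedantic point: your dismissal of the case $\E\tau=\infty$ (``the right-hand side is infinite'') tacitly assumes $\sigma>0$; if $\sigma=0$ then $X_\tau=\mu$ a.s.\ and the bound is trivial anyway, and the paper makes the same reduction without comment, so this is not a real gap.
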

\begin{proof}
Without loss of generality we may assume $\mu=0$ and $\E \tau < \infty$.
Introducing the martingale $M_t = \sum_{s\le t} X_s$, we obtain
\[
\E X_\tau \le \E (M_\tau - \min_{s\le \tau} M_s \I(\tau>1)) \le \|\min_{s\le
\tau} M_s\|_{L^2} \le 2 \biggl\|\Bigl(\sum_{s\le \tau}
X_s^2\Bigr)^{1/2}\biggr\|_{L^2} = 2 \sigma \sqrt{\E \tau},
\]
where in the second inequality we applied Wald's identity $\E M_\tau = 0$,
in the next one the Burkholder--Davis--Gundy inequality, and then Wald's
identity again (see, e.g., Chapter~7 of \cite{Shiryaev19} for these
results).
\end{proof}

\medskip
\noindent
\textbf{Proof of Theorem~\ref{theorem-1}}. Let us first show that the proof
can be reduced to the case when $M=2$ by replacing investors $2,\ldots,M$
with a representative investor. Let $r_t^m = Y_t^m/W_t$ denote the relative
wealth of the investors and define
\begin{equation}
\tilde Y_t = \sum_{m= 2}^M Y_t^m, \qquad \tilde \lambda_t^n = \sum_{m=2}^M
\frac{r_{t-1}^m}{1-r_{t-1}^1}
\lambda_t^{m,n},\label{representative-investor-1}
\end{equation}
where $\lambda_t^{m,n}=\lambda_t^{m,n}(\omega)$ are the realizations of the
strategies defined in \eqref{realizations}, and we put $\tilde \lambda_t^n =
0$ when $r_{t-1}^1=1$. Since $\lambda^{*,n}_t>0$ for all $n$ by
Assumption~\ref{assumption-positivity}, we have $W_t = \sum_n X_t^n>0$, so
$r_t^m$ are well-defined. Denote
\begin{equation}
\tilde \tau_l = \min\{t\ge 0: \tilde Y_t\ge l\}\label{representative-investor-1a}
\end{equation}
and observe that $\tilde \tau_l \le \tau^m_l$ for any $m\ge 2$. Also, it is
straightforward to check that the wealth sequence of investor 1 satisfies
the relation
\begin{equation}
Y_t^1 = \sum_{n=1}^N \frac{\lambda^{*,n}_t Y_{t-1}^1}{ \lambda^{*,n}_t Y_{t-1}^1 +
\tilde \lambda_t^n \tilde Y_{t-1}} X_t^n,\label{representative-investor-2}
\end{equation}
which is precisely relation \eqref{capital-equation-1} in the case of two
investors who have the wealth $Y_t^1,\tilde Y_t$ and use the strategies
$\Lambda^*_t,\tilde\Lambda_t = \tilde\lambda_t(\omega)$, while $\tilde
\tau_l$ is the first moment when the wealth of the second investor reaches
or exceeds $l$. Consequently, to prove the theorem, it would be enough to
show that
\[
\limsup_{t\to\infty} \frac{\E \tau^1_{l}}{\E \tilde \tau_l} \le 1.
\]

So, from now on we will deal with the case $M=2$. For brevity of notation,
we will denote the realizations of strategies and the wealth of the first
and the second investors, respectively, by $\lambda_t\;(=\lambda^*_t)$, $Y_t$
and $\tilde \lambda_t$, $\tilde Y_t$; their relative wealth will be denoted
by $r_t$ and $\tilde r_t=1-r_t$, and the moments of reaching or exceeding a
wealth level $l$ by $\tau_l$ and $\tilde \tau_l$.

From \eqref{representative-investor-2}, we find
\begin{equation}
\frac{r_t}{r_{t-1}} = \sum_{n=1}^N \frac{\lambda^n_t}{r_{t-1} \lambda^n_t + \tilde
r_{t-1} \tilde \lambda_t^n } R_t^n.\label{ratio}
\end{equation}
Denoting $\beta_t^n = r_{t-1}\lambda^n_t + \tilde r_{t-1} \tilde
\lambda_t^n$, we obtain the relation
\begin{equation}
\E (\ln r_t \mid \F_{t-1}) - \ln r_{t-1} \ge \E \Biggl(\sum_{n=1}^N R_t^n \ln
\frac{\lambda_t^n}{\beta_t^n}\Biggr) = \sum_{n=1}^N  \lambda^{n}_t (\ln
\lambda^n_t - \ln \beta_t^n) \ge 0,\label{submart}
\end{equation}
where in the first inequality we used the concavity of the logarithm, in the
second one that $\lambda^n_t= \E(R_t^n \mid \F_{t-1})$, and in the last
inequality applied Lemma~\ref{lemma-gibbs} to the vectors $\lambda_t$ and
$\beta_t$. Inequality \eqref{submart} implies that $\ln r_t$ is a
submartingale (the integrability of $\ln r_t$ follows from that $r_t\ge
r_{t-1} \min_n \lambda^n_t \ge \epsilon r_{t-1}$, which can be seen from
\eqref{ratio}). In passing, observe that since this submartingale is
non-positive, with probability 1 there exists the finite limit
$\lim_{t\to\infty} \ln r_t$, so $\inf_{t\ge 0} r_t > 0$. This property
allows to call $\Lambda^*$ a \emph{survival strategy}, i.e.\ an investor
``survives'' in the market by keeping a share of wealth bounded away from
zero. This result was proved in \cite{AmirEvstigneev+13} for general payoff
sequences (note that in earlier papers, e.g. \cite{BlumeEasley92}, the term
``survival'' has a somewhat different meaning).

If $\E \tilde \tau_l <\infty$, we find
\begin{equation}
\ln l \le \E \ln \tilde Y_{\tilde \tau_l} = \E(\ln
\tilde r_{\tilde \tau_l} + \ln W_{\tilde \tau_l}) \le \E \ln W_{\tilde
\tau_l} = \theta \E \tilde
\tau_{l} +  \ln  W_0,\label{bound-1}
\end{equation}
where in the last equality we applied Wald's identity to the sequence of
i.i.d.\ random variables $\ln(W_t/W_{t-1}) = \ln \rho_t$. Here, as in
Section~\ref{section-results}, $\theta = \E \ln \rho_1$.

Inequality (\ref{bound-1}) gives us the lower bound $\E \tilde \tau_l \ge
\theta^{-1} \ln (l/ W_0)$. Then we would like to obtain an upper bound for
$\E \tau_{l}$ of the same order. To do that, we will work with a slightly
altered sequence $\tilde \lambda_t$, which we will define now.

Let $\epsilon>0$ be the constant from Assumption~\ref{assumption-positivity}
and put $\delta=\epsilon^2/256$. Define recursively the sequences $r_t'$,
$t\ge 0$, and $\tilde \lambda_t'$, $t\ge 1$, by the relations
\begin{alignat}{2}
&r_0' = r_0,&& \notag\\
&\tilde \lambda_t' = \tilde\lambda_t + \delta\lambda_t\I(\min_n \tilde \lambda_t^n \le
\epsilon/2,\; r_{t-1}'\le 1/2), \qquad &&t\ge 1,\notag\\
&r_t' = \sum_{n=1}^N \frac{\lambda^n_t r_{t-1}'}{\lambda^n_t
r'_{t-1}+\tilde\lambda^{'n}_t (1-r_{t-1}')} R_t^n,
&&t\ge 1.\label{r-prime}
\end{alignat}
By induction, one can check that $r_t' \le r_t$. Put $Y_t' = r_t' W_t$ and
$\tau' = \min\{t\ge 0: Y_t' \ge l\}$. Then we have $\tau_l \le \tau'$, so we
will look for an upper bound for $\E \tau'$.

Similarly to (\ref{submart}), we can show that $\ln r_t'$ is a
submartingale. Indeed, let $\beta'_t = r_{t-1}' \lambda_t + (1-r_{t-1}')
\tilde \lambda_t'$. Then
\begin{equation}
\E (\ln r_t' \mid \F_{t-1}) - \ln r_{t-1}' \ge \sum_{n=1}^N  \lambda^{n}_t (\ln
\lambda^n_t - \ln {\beta_t'}^n) \ge \frac14 \biggl\| \lambda_t - \frac{\beta_t'}{|\beta_t'|} \biggr\|^2 -
\ln|\beta_t'| .\label{submart-2}
\end{equation}
On the event $\{\tilde\lambda'_t = \tilde\lambda_t\}$ we have
$|\beta_t'|=1$, so the right-hand side of \eqref{submart-2} is non-negative.
On the event $\{\tilde\lambda'_t = \tilde\lambda_t + \delta\lambda_t\}$,
there exists a coordinate $n = n(\omega)$ such that $\tilde \lambda_t^n \le
\epsilon/2$, so, using that $|\beta'_t| = 1 + \delta(1-r'_{t-1}) \le
1+\delta$, we can estimate $\ln| \beta_t'| \le \delta$ and
\[
\biggl\|\lambda_t - \frac{\beta_t'}{|\beta_t'|} \biggr\| \ge
\frac{1-r_{t-1}'}{|\beta_t'|} ( \lambda^n_t - \tilde \lambda_t^n ) \ge
\frac\epsilon8.
\]
Then the choice of $\delta$ implies that the right-hand side of
(\ref{submart-2}) is non-negative on the event $\{\tilde\lambda'_t =
\tilde\lambda_t + \delta\lambda_t\}$ as well. Thus, $\ln r_t'$ is a non-positive
submartingale, and, in particular, $\inf_{t\ge 0} r'_t> 0$. Since
$W_t\to\infty$, we also have $\tau'<\infty$.

From now on, assume that $l > Y_0$ (since we take $l\to\infty$). Applying
Fatou's lemma, we obtain
\[
\ln l \ge \limsup_{t\to\infty }\E \ln Y'_{\tau'\wedge t-1} =
\limsup_{t\to\infty } \E \biggl(\ln W_{\tau'\wedge t} - \ln \rho_{\tau'\wedge
t} + \ln r'_{\tau'\wedge t} - \ln \frac{r'_{\tau'\wedge t}}{r'_{\tau'\wedge
t-1}}\biggr).
\]
By Wald's identity, $\E \ln W_{\tau'\wedge t} = \theta \E (\tau'\wedge t) +
\ln W_0$. From Lemma~\ref{lemma-stopping}, $\E \ln \rho_{\tau'\wedge t} \le
\theta + 2 \sigma \sqrt{\E (\tau'\wedge t)}$, where $\sigma^2 = \Var(\ln
\rho_1)$. Since $\ln r_t'$ is a submartingale, $\E \ln {r'_{\tau'\wedge t}
\ge \ln r_0}$. Finally, for all $t\ge 1$ we have $r_t' \le
r'_{t-1}/(\delta\epsilon)$. Indeed, if $r_{t-1}'>1/2$ this is obvious, while
if $r_{t-1}'\le 1/2$ we can use \eqref{r-prime} and
the inequalities $\lambda_t^n \ge \epsilon\ge \delta\epsilon$ and $\tilde
\lambda_t^{'n} \ge \min(\epsilon/2,\; \delta\lambda_t^n) \ge \delta\epsilon$
to find
\[
r_t' \le \sum_{n=1}^N \frac{r_{t-1}'}{\lambda_t^n r_{t-1}' +
\tilde \lambda_t^{'n}(1-r_{t-1}')} R_t^n \le \frac{ r_{t-1}'}{\delta\epsilon}.
\]
Consequently, we obtain the inequality
\[
\ln l > \limsup_{t\to\infty} \Bigl(\theta \E (\tau'\wedge t) -
2\sigma\sqrt{\E(\tau'\wedge t)}\Bigr) - \theta + \ln W_0 + \ln r_0 + \ln (\delta\epsilon).
\]
Applying the monotone convergence theorem, we can see that $\E \tau'$ should
be finite, and hence
\begin{equation}
\ln l \ge \theta \E \tau' - 2\sigma \sqrt{\E \tau'} - \theta + \ln
(Y_0\delta\epsilon).\label{bound-2}
\end{equation}
Now the claim of the theorem follows from \eqref{bound-1}, \eqref{bound-2},
and the relation $\E \tau_l \le \E \tau'$.
\qed

\bigskip
In the following proofs of Theorem~\ref{theorem-2} and
Proposition~\ref{proposition-example}, we will use the same notation for
investors 1 and 2 as in the proof of Theorem~\ref{theorem-1}, i.e.\ without
and with tilde, respectively.

\bigskip
\noindent
\textbf{Proof of Theorem~\ref{theorem-2}}. We can assume that $\E \tilde
\tau_l < \infty$ for all $l>0$, as otherwise the proof becomes trivial. Let
us begin with an auxiliary estimate. For $c\in[0,1)$ we define $\eta_c =
\sum_{t\ge 0} \I(r_t< c)$ and will now show that $\E \eta_c < \infty$. As
was shown in the proof of Theorem~\ref{theorem-1}, $\ln r_t$ is a
non-positive submartingale. If we denote by $C_t$ its compensator, i.e.\ the
non-negative and non-decreasing sequence
\[
C_t := \sum_{s=1}^t\E\biggl(\ln \frac{r_{s}}{r_{s-1}} \;\bigg|\; \F_{s-1}\biggr), 
\]
then $C_t$ a.s.-converges to a limit $C_\infty$ with $\E C_\infty < \infty$.
This follows from the monotone convergence theorem since $\E C_t = \E \ln
(r_t/r_0) \le -\ln r_0$. Using Lemma~\ref{lemma-gibbs}, similarly to
\eqref{submart} and \eqref{submart-2}, we see that (with the same $\beta_t$
as in \eqref{submart})
\begin{equation}
C_\infty \ge \frac14 \sum_{t=1}^\infty \|\lambda - \beta_t\|^2 = \frac14
\sum_{t=1}^\infty (1-r_{t-1})^2\|\lambda - \tilde\lambda_t\|^2 \ge \frac{a^2}{4}
\sum_{t=1}^\infty (1-r_{t-1})^2.\label{compensator-bound}
\end{equation}
Therefore,  $\eta_c \le 4 C_\infty/(a(1-c))^2$, so $\E \eta_c <
\infty$.

From \eqref{ratio}, we find
\[
\frac{\tilde r_t}{\tilde r_{t-1}} = \sum_{n=1}^N \frac{\tilde \lambda^{n}_t}{
\lambda^n r_{t-1} +  \tilde \lambda_t^{n} \tilde r_{t-1}} R_t^n,
\]
which implies 
\[
\E \biggl(\ln \frac{\tilde r_t}{\tilde r_{t-1}} \;\bigg|\; \F_{t-1}\biggr)
\le \E\biggl( \ln \sum_{n=1}^N \frac{\tilde \lambda^{n}_t R_t^n}{  \lambda^n
r_{t-1}}  \;\bigg|\; \F_{t-1}\biggr) \le f(a) - \ln r_{t-1}.
\]
Since $r_t$ is a submartingale, we have $\E (\ln (\tilde r_t/\tilde r_{t-1})
\mid \F_{t-1}) \le 0$ by Jensen's inequality, so $\ln \tilde r_t$ is a
supermartingale (its integrability follows from that $\tilde\epsilon \le
\tilde r_t/\tilde r_{t-1} \le\min(\epsilon,\tilde\epsilon)^{-1}$).
Consequently, using that $\E \tilde \tau_l<\infty$ and applying Doob's
stopping theorem, we obtain
\begin{equation}
\E \ln \tilde r_{\tilde \tau_l} \le \E \sum_{s=1}^{\tilde \tau_l}\min(f(a) -
\ln r_{s-1}, 0).
\label{rtilde-bound}
\end{equation}
The possibility of applying Doob's theorem can be justified by first
applying it to the bounded stopping times $\tilde \tau_l \wedge t$, then
passing to the limit $t\to\infty$ using Fatou's lemma in the left-hand side
of \eqref{rtilde-bound} (note that $\ln \tilde r_{\tilde \tau_l\wedge t}$ is
bounded from below by the integrable random variable $ \tilde \tau_l \ln
\tilde \epsilon + \ln\tilde r_0$), and using the monotone convergence
theorem in the right-hand side.

Now, similarly to \eqref{bound-1}, for any $c\in [e^{f(a)},1)$ we find
\[
\begin{split}
\ln \frac{l}{W_0} &\le \E \ln \tilde r_{\tilde \tau_l} + \theta \E \tilde \tau_l
\le \E\sum_{s=1}^{\tilde \tau_l} (f(a) - \ln(c)) I(r_{s-1}\ge c) + \theta
\E \tilde \tau_l \\ &\le (\theta+f(a) - \ln(c)) \E \tilde \tau_l +
(\ln c - f(a)) \E \eta_c.
\end{split}
\]
Note that since we consider the case $\E \tilde\tau_l <\infty$ for all
$l>0$, we necessarily have $\theta + f(a) - \ln(c) > 0$. Together with
\eqref{bound-2}, this implies
\[
\limsup_{l\to\infty} \frac{\E  \tau_{l}}{\E \tilde \tau_{l}} \le \frac{\theta +
f(a) - \ln c}{\theta}.
\]
Taking $c\to 1$, we obtain the claim of the theorem. \qed

\bigskip
\noindent
\textbf{Proof of Proposition~\ref{proposition-example}.} We will assume that
$\tilde\lambda \neq \lambda$, as otherwise the claim of the proposition is
obvious. Since $W_t = \rho ^t$, for investor 1 we have $\tau_l \ge
\theta^{-1}\ln l$, where $\theta = \ln \rho$. Therefore, it will be enough
to show that for investor 2 we have
\begin{equation}
\tilde \tau_l \le \frac{\ln l}{\theta}(1 + o(1)).\label{prop1-bound}
\end{equation}
Using the wealth equation \eqref{capital-equation-1} and that $\lambda^n =
R^n$, we obtain
\begin{equation}
\frac{\tilde Y_t}{\tilde Y_{t-1}} = \rho \sum_{n=1}^N \frac{\tilde \lambda^n
\lambda^n}{ \lambda^n r_{t-1} + \tilde \lambda^n \tilde r_{t-1}}.
\label{nonrandom-wealth}
\end{equation}
Inequality \eqref{compensator-bound} implies that $ r_t\to 1$, so the
right-hand side of \eqref{nonrandom-wealth} is strictly greater than~$1$ for
$t$ large enough. Hence $\tilde Y_t \to \infty$, which implies that $\tilde
\tau_l <\infty$ for all $l$. Consequently,
\begin{equation}
\ln l> \ln \tilde Y_{\tilde\tau_l-1} = \ln W_{\tilde \tau_l-1} + \ln
\tilde r_{\tilde \tau_l-1} =  \theta(\tilde \tau_l-1) + \ln \tilde
r_{\tilde \tau_l-1}.\label{lnl-bound}
\end{equation}
From \eqref{nonrandom-wealth}, using the concavity of the logarithm and the
inequality $\ln x \ge 1-x^{-1}$, we obtain the bound
\[
\ln \frac{\tilde r_t}{\tilde r_{t-1}} \ge \sum_{n=1}^N \tilde \lambda^n \ln
\frac{\lambda^n}{\lambda^n r_{t-1} + \tilde \lambda^n \tilde r_{t-1}} \ge
\ln r_{t-1}^{-1} - \sum_{n=1}^N \frac{ (\tilde\lambda^n)^2\tilde r_{t-1}}{ \lambda^n
r_{t-1}} \ge \tilde r_{t-1} \biggl(1- \sum_{n=1}^N \frac{(\tilde
\lambda^n)^2}{\lambda^n r_0}\biggr),
\]
where in the last inequality we estimated $r_{t-1} \ge r_0$ since $r_t$ is a
non-decreasing sequence (in the proof of Theorem~\ref{theorem-1}, we showed
that it is a submartingale). Since $\tilde r_t\to 0$ and $\tilde
\tau_l\to\infty$, we get $\tilde \tau_l^{-1}{\ln \tilde r_{\tilde \tau_l}}
\to 0$. Then relation \eqref{lnl-bound} implies \eqref{prop1-bound}, which
is what is needed.

\phantomsection
\addcontentsline{toc}{chapter}{\refname}
\bibliographystyle{abbrvnat}
\bibliography{wealth-level}

\begin{thebibliography}{23}
\providecommand{\natexlab}[1]{#1}
\providecommand{\url}[1]{\texttt{#1}}
\expandafter\ifx\csname urlstyle\endcsname\relax
  \providecommand{\doi}[1]{doi: #1}\else
  \providecommand{\doi}{doi: \begingroup \urlstyle{rm}\Url}\fi

\bibitem[Algoet and Cover(1988)]{AlgoetCover88}
P.~H. Algoet and T.~M. Cover.
\newblock Asymptotic optimality and asymptotic equipartition properties of
  log-optimum investment.
\newblock \emph{The Annals of Probability}, 16\penalty0 (2):\penalty0 876--898,
  1988.

\bibitem[Amir et~al.(2011)Amir, Evstigneev, Hens, and Xu]{AmirEvstigneev+11}
R.~Amir, I.~V. Evstigneev, T.~Hens, and L.~Xu.
\newblock Evolutionary finance and dynamic games.
\newblock \emph{Mathematics and Financial Economics}, 5\penalty0 (3):\penalty0
  161--184, 2011.

\bibitem[Amir et~al.(2013)Amir, Evstigneev, and
  Schenk-Hopp{\'e}]{AmirEvstigneev+13}
R.~Amir, I.~V. Evstigneev, and K.~R. Schenk-Hopp{\'e}.
\newblock Asset market games of survival: a synthesis of evolutionary and
  dynamic games.
\newblock \emph{Annals of Finance}, 9\penalty0 (2):\penalty0 121--144, 2013.

\bibitem[Amir et~al.(2020)Amir, Belkov, Evstigneev, and Hens]{AmirBelkov+20}
R.~Amir, S.~Belkov, I.~V. Evstigneev, and T.~Hens.
\newblock An evolutionary finance model with short selling and endogenous asset
  supply.
\newblock \emph{Economic Theory}, published online, 2020.

\bibitem[Belkov et~al.(2020)Belkov, Evstigneev, and Hens]{BelkovEvstigneev+20}
S.~Belkov, I.~V. Evstigneev, and T.~Hens.
\newblock An evolutionary finance model with a risk-free asset.
\newblock \emph{Annals of Finance}, published online, 2020.

\bibitem[Blume and Easley(1992)]{BlumeEasley92}
L.~Blume and D.~Easley.
\newblock Evolution and market behavior.
\newblock \emph{Journal of Economic Theory}, 58\penalty0 (1):\penalty0 9--40,
  1992.

\bibitem[Breiman(1961)]{Breiman61}
L.~Breiman.
\newblock Optimal gambling systems for favorable games.
\newblock In \emph{Proceedings of the 4th Berkeley Symposium on Mathematical
  Statistics and Probability}, volume~1, pages 63--68, 1961.

\bibitem[Dempster et~al.(2007)Dempster, Evstigneev, and
  Schenk-Hopp{\'e}]{DempsterEvstigneev+07}
M.~A.~H. Dempster, I.~V. Evstigneev, and K.~R. Schenk-Hopp{\'e}.
\newblock Volatility-induced financial growth.
\newblock \emph{Quantitative Finance}, 7\penalty0 (2):\penalty0 151--160, 2007.

\bibitem[Drokin and Zhitlukhin(2020)]{DrokinZhitlukhin20}
{\relax Ya}.~Drokin and M.~Zhitlukhin.
\newblock Relative growth optimal strategies in an asset market game.
\newblock \emph{Annals of Finance}, published online, 2020.

\bibitem[Evstigneev et~al.(2016)Evstigneev, Hens, and
  Schenk-Hopp{\'e}]{EvstigneevHens+16}
I.~Evstigneev, T.~Hens, and K.~R. Schenk-Hopp{\'e}.
\newblock Evolutionary behavioral finance.
\newblock In E.~Haven et~al., editors, \emph{The Handbook of Post Crisis
  Financial Modelling}, pages 214--234. Palgrave Macmillan UK, 2016.

\bibitem[Evstigneev et~al.(2002)Evstigneev, Hens, and
  Schenk-Hopp{\'e}]{EvstigneevHens+02}
I.~V. Evstigneev, T.~Hens, and K.~R. Schenk-Hopp{\'e}.
\newblock Market selection of financial trading strategies: global stability.
\newblock \emph{Mathematical Finance}, 12\penalty0 (4):\penalty0 329--339,
  2002.

\bibitem[Fernholz and Shay(1982)]{FernholzShay82}
R.~Fernholz and B.~Shay.
\newblock Stochastic portfolio theory and stock market equilibrium.
\newblock \emph{The Journal of Finance}, 37\penalty0 (2):\penalty0 615--624,
  1982.

\bibitem[Hens and Schenk-Hopp{\'e}(2005)]{HensSchenkHoppe05}
T.~Hens and K.~R. Schenk-Hopp{\'e}.
\newblock Evolutionary stability of portfolio rules in incomplete markets.
\newblock \emph{Journal of mathematical economics}, 41\penalty0 (1-2):\penalty0
  43--66, 2005.

\bibitem[Holtfort(2019)]{Holtfort19}
T.~Holtfort.
\newblock From standard to evolutionary finance: a literature survey.
\newblock \emph{Management Review Quarterly}, 69\penalty0 (2):\penalty0
  207--232, 2019.

\bibitem[Karatzas and Kardaras(2007)]{KaratzasKardaras07}
I.~Karatzas and C.~Kardaras.
\newblock The num{\'e}raire portfolio in semimartingale financial models.
\newblock \emph{Finance and Stochastics}, 11\penalty0 (4):\penalty0 447--493,
  2007.

\bibitem[Kardaras and Platen(2010)]{KardarasPlaten10}
C.~Kardaras and E.~Platen.
\newblock Minimizing the expected market time to reach a certain wealth level.
\newblock \emph{SIAM Journal on Financial Mathematics}, 1\penalty0
  (1):\penalty0 16--29, 2010.

\bibitem[Kelly(1956)]{Kelly56}
J.~L. Kelly, Jr.
\newblock A new interpretation of information rate.
\newblock \emph{Bell System Technical Journal}, 35\penalty0 (4):\penalty0
  917--926, 1956.

\bibitem[Long(1990)]{Long90}
J.~B. Long.
\newblock The numeraire portfolio.
\newblock \emph{Journal of Financial Economics}, 26\penalty0 (1):\penalty0 29
  -- 69, 1990.

\bibitem[Luenberger(1998)]{Luenberger98}
D.~Luenberger.
\newblock \emph{Investment Science}.
\newblock Oxford University Press, 1998.

\bibitem[MacLean et~al.(2011)MacLean, Thorp, and Ziemba]{MacLeanThorpZiemba11}
L.~C. MacLean, E.~O. Thorp, and W.~T. Ziemba, editors.
\newblock \emph{The Kelly Capital Growth Investment Criterion: Theory and
  Practice}, volume~3 of \emph{Handbook in Financial Economic}.
\newblock World Scientific, Singapore, 2011.

\bibitem[Platen and Heath(2006)]{PlatenHeath06}
E.~Platen and D.~Heath.
\newblock \emph{A Benchmark Approach to Quantitative Finance}.
\newblock Springer-Verlag, Berlin, 2006.

\bibitem[Shiryaev(2019)]{Shiryaev19}
A.~N. Shiryaev.
\newblock \emph{Probability-2}.
\newblock Springer, 3rd edition, 2019.

\bibitem[Zhitlukhin(2018)]{Zhitlukhin20}
M.~Zhitlukhin.
\newblock Survival investment strategies in a continuous-time market model with
  competition.
\newblock \emph{arXiv:1811.12491}, 2018.

\end{thebibliography}
\end{document}